\documentclass[12pt]{article}

\usepackage{amsfonts,amsmath,amsthm}
\usepackage{graphicx}
\newtheorem{theorem}{Theorem}[section]

\theoremstyle{definition}
\newtheorem{definition}[theorem]{Definition}
\newtheorem{definitions}[theorem]{Definitions}
\newtheorem{example}{Example}
\usepackage{algorithm}%
\usepackage{algorithmicx}%
\usepackage{algpseudocode}%
\usepackage{listings}%
\lstdefinelanguage{Maxima}{
keywords={linsolve,append,length,col,transpose,matrix,addrow,rhs,
submatrix,in,cons,or,else,elseif,askinteger,makelist,return,block,zeromatrix,read,for,thru,do,if,else,then,print},
sensitive=true,comment=[n][\itshape]{/*}{*/}}
\lstset{texcl=true,basicstyle=\small\sf,commentstyle=\small\rm,mathescape=true,escapeinside={(*}{*)}}

\title{On the minimal simplex economy}
\author{Antonio Pulgar\'{i}n}
\date{\small{Univ. of Extremadura, Avda. Universidad s/n, 10003 C\'{a}ceres (Spain)\\
email: aapulgar@unex.es,\ ORCID: 0000-0001-5367-2888}}
\begin{document}
\maketitle 
\begin{abstract}In our previous paper we proved that every affine economy has a competitive equilibrium. We define a simplex economy as an affine economy consisting of a stochastic allocation (defining the initial endowments) and a variation with repetition of the number of commodities taking the number of consumers (representing the preferences). We show that a competitive equilibrium can be intrinsically computed in any minimal simplex economy.\par\medskip\noindent
Keywords: Simplex economy, competitive equilibrium, stochastic allocation.\par\medskip\noindent
JEL Classification: {D50}\par\noindent
MSC Classification: {91B50}
\end{abstract}
\section{Introduction}\label{sec1}
The commodity space should reflect the constraints limiting consumption possibilities. For example, a lower bound 0 would imply that one cannot consume a negative quantity of a good, and an upper bound 1 would mean that agents cannot consume more than entirety the good. This fact suggest assuming commodities as proportions of the closed interval $[0,1]$.
\begin{definition}
The {\itshape commodity space} for a finite number $n$ of consumption goods is the product $$[0,1]^n$$where each component represents the proportion of the corresponding commodity among all possible combinations.\end{definition} 

The value of a commodity will depend on the other commodities that can be obtained in exchange for it, and to this aim we need to consider how the agents allocate their income among the different commodities.
\begin{definitions}The simplex $$P=\left\lbrace(p_1,\ldots,p_n)\in[0,1]^n:p_1+\dots+p_n=1\right\rbrace,$$ endowed with the subspace topology induced from the canonical product topology of $[0,1]^n$ is called {\itshape price space}.

The {\itshape extreme points boundary} of $P$ is the finite subset $$\partial P=\{e_1,\ldots,e_n\}\subset P,\text{ where } e_j=(\delta_{1j},\ldots,\delta_{nj})\text{ with }\delta_{ij}=\begin{cases}0\text{ if }i\neq j\\
1\text{ if }i=j
\end{cases}$$\end{definitions} 

The {\itshape value} of the {\itshape commodity bundle} $f=(a_1,\ldots,a_n)\in [0,1]^n$
for the {\itshape price system} $p=(p_1,\ldots,p_n)\in P$ is
$$f\cdot p=a_1p_1+\cdots+a_np_n\in[0,1],$$
thus, 
$$a_j=f\cdot e_j\text{ for all }j=1,\ldots ,n.$$
The commodity space $[0,1]^n$ is therefore isomorphic to the vector lattice $C(\partial P,[0,1])$ of $[0,1]$-valued continuous functions on the finite subset $\partial P$:$$(a_1,\ldots,a_n)\in [0,1]^n\longleftrightarrow f\in C(\partial P,[0,1]):f(e_j)=f\cdot e_j=a_j\in[0,1].$$
\begin{figure}[h]
\centering
\scalebox{0.4}{\includegraphics{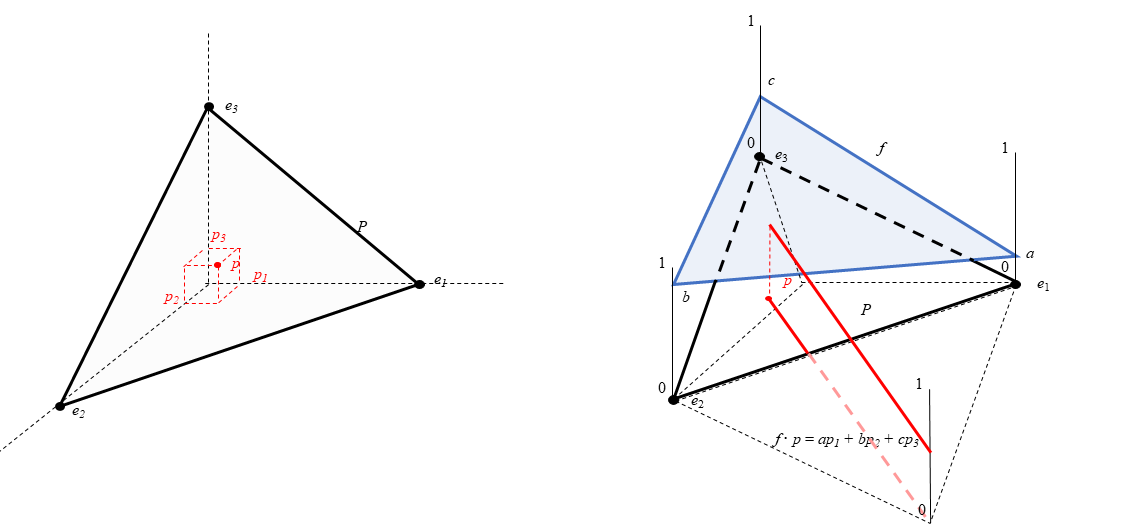}}
\caption{$3$-commodity-price space and the representation of a commodity bundle.}\label{figure1}
\end{figure}
\section{The simplex economy}\label{sec2}
Let $n$ be the number of commodities and consider a finite number $m$ of consumer agents. 
\begin{definitions}\label{stochastic}
Matrices $$F=\left(\begin{matrix}
a_{11}&\dots & a_{1n}\\
\vdots &\ddots& \vdots\\
a_{m1}&\dots & a_{mn}
\end{matrix}\right)$$ whose elements are probabilities (i.e. $a_{ij}\in[0,1]$) are called 
{\itshape allocations}. Their rows $$f_i=(a_{i1},\ldots, a_{in})\in [0,1]^n$$identify with the family of commodity bundles $\{f_1,\ldots,f_m\}$ where $$a_{ij}=f_i\cdot e_j.$$

If $a_{1j}+\ldots +a_{mj}=1$ for all $j=1,\ldots,n$, then the matrix $F$ is said to be a {\itshape stochastic allocation}.

Each consumer $i=1,\ldots,m$ will contribute to the market with an {\itshape initial endowment} $\omega_i=(w_{i1},\ldots, w_{in})$ defined from the row $i$ of a stochastic allocation
$$W=\left(\begin{matrix}
w_{11}&\dots & w_{1n}\\
\vdots &\ddots& \vdots\\
w_{m1}&\dots & w_{mn}
\end{matrix}\right).$$

The family of initial endowments $\{\omega_1,\ldots,\omega_m\}$ becomes a partition of the unity, that is, the {\itshape total endowment} $$\omega=\omega_1+\cdots+\omega_m$$ satisfies
$\omega\cdot p=1$ for all $p\in P$. 

If an allocation $F$ clears the market (i.e. 
$(f_1+\cdots+f_m)\cdot p=\omega\cdot p$ for all $p\in P$), then $F$ is said to be {\itshape feasible}.

Each consumer $i=1,\ldots,m$ chooses commodity bundles accordingly to a {\itshape preference} relationship
represented by a utility function $$u_i(f)=f\cdot e_{\sigma(i)},\text{ where
}\sigma(i)\in \{1,\ldots,n\}.$$

The {\itshape strict preference} relationship between allocations is defined as follows:
$$F\prec G\iff  f_{i}\cdot e_{\sigma(i)} <g_{i}\cdot e_{\sigma(i)}\ \text{ for all }i=1,\ldots,m,$$where $\{f_1,\ldots,f_m\}$ and $\{g_1,\ldots,g_m\}$ are the families of commodity bundles defined by $F$ and $G$  respectively.\end{definitions}
\begin{definition}\label{simplex}A {\itshape simplex economy} is a dupla $\langle W,\sigma\rangle$ where 
$$W=\left(\begin{matrix}
w_{11}&\dots & w_{1n}\\
\vdots &\ddots& \vdots\\
w_{m1}&\dots & w_{mn}
\end{matrix}\right)$$is a stochastic allocation defining the initial endowments, and 
$$\sigma=(\sigma(1),\ldots,\sigma(m))$$ is a variation with repetition of the $n$ commodities representing the preferences of the $m$ consumer agents.
\end{definition}
\section{Competitive equilibrium}\label{sec3}
\begin{definition}Given a simplex economy  $\langle W,\sigma\rangle$, a feasible allocation $F$ is said to be a {\itshape competitive equilibrium} provided there exists a price system $p\in P$, $p\neq 0$ for which $F\prec G$ implies $G$ is out of the $p$-budget (i.e. there exists $i\in\{1,\ldots,m\}$ such that $g_{i}\cdot p> w_{i}\cdot p)$.
\end{definition}
\begin{theorem}Every simplex economy has a competitive equilibrium
\end{theorem}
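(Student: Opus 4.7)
The plan is to exhibit both a price system $p\in P$ and a feasible allocation $F$ explicitly, then verify the equilibrium condition. Write $S_j=\sigma^{-1}(j)$ and let $v_j=\sum_{i\in S_j}\omega_i\in[0,1]^n$ (with $v_j=0$ when $S_j=\emptyset$), and consider the $n\times n$ matrix $A$ whose $j$-th row is $v_j$. Since $\{S_j\}_{j}$ partitions $\{1,\ldots,m\}$ and $W$ is stochastic, the $k$-th column of $A$ sums to $\sum_i w_{ik}=1$, so $A$ is column-stochastic. The continuous self-map $p\mapsto Ap$ of the compact convex simplex $P$ then admits a fixed point by Brouwer (equivalently, by Perron--Frobenius applied to $A$), yielding $p\in P$ with $Ap=p$; for $j\notin\sigma(\{1,\ldots,m\})$ the $j$-th row $v_j$ vanishes and so $p_j=(Ap)_j=0$, which shows that $p$ is supported on the image of $\sigma$ and, in particular, that some $j_0=\sigma(i_0)$ satisfies $p_{j_0}>0$.

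Next I would define the allocation $F$ column by column: for each $j$ with $p_j>0$ (which forces $S_j\neq\emptyset$) set $f_{ij}=(\omega_i\cdot p)/p_j$ for $i\in S_j$ and $f_{ij}=0$ otherwise, so that the column sum equals $(v_j\cdot p)/p_j=1$ and each entry automatically lies in $[0,1]$; for each $j$ with $p_j=0$ distribute the unit supply among consumers arbitrarily subject to $f_{ij}\in[0,1]$ and $\sum_i f_{ij}=1$. Since zero-price columns contribute nothing to any inner product with $p$, one obtains $f_i\cdot p=\omega_i\cdot p$ when $p_{\sigma(i)}>0$ and $f_i\cdot p=0\leq\omega_i\cdot p$ otherwise, while $\sum_i f_i=\omega$ holds column by column, so $F$ is feasible.

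Finally I would verify the equilibrium condition. Suppose $F\prec G$; applied to the distinguished consumer $i_0$ above, strict preference gives $g_{i_0}\cdot e_{\sigma(i_0)}>f_{i_0}\cdot e_{\sigma(i_0)}=(\omega_{i_0}\cdot p)/p_{\sigma(i_0)}$, whence $g_{i_0}\cdot p\geq(g_{i_0}\cdot e_{\sigma(i_0)})\,p_{\sigma(i_0)}>\omega_{i_0}\cdot p$, so $G$ is out of the $p$-budget at $i_0$. The main subtlety lies in the construction of $F$ on the zero-price columns: one must allocate the unit supply there without violating $[0,1]$-bounds or disturbing any consumer's budget, which the column-by-column construction achieves by keeping all non-trivial activity confined to positive-price columns indexed by the image of $\sigma$.
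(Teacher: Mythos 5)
Your proof is correct, and it takes a genuinely different route from the paper's. The paper's own proof is a short reduction: it checks that each commodity bundle acts as an affine continuous function $p\mapsto f\cdot p$ on $P$, concludes that $P$ is a Bauer simplex, and then invokes the general existence theorem for affine economies from the author's earlier work (Theorem 4.4 of the cited reference), so it is neither self-contained nor explicit. Your argument instead aggregates the endowments by preference class into a column-stochastic matrix $A$ with rows $v_j=\sum_{i\in\sigma^{-1}(j)}\omega_i$, extracts a fixed point $p=Ap$ in $P$ (Brouwer, or Perron--Frobenius), and hands each consumer $i$ with $\sigma(i)=j$ and $p_j>0$ the share $(\omega_i\cdot p)/p_j$ of the one good he cares about; the fixed-point identity $p_j=v_j\cdot p$ is precisely what makes the column sums equal $1$, keeps every entry in $[0,1]$, and balances each budget, and the equilibrium verification at a consumer $i_0$ with $p_{\sigma(i_0)}>0$ goes through exactly as you wrote it. What the paper's route buys is brevity and the placement of the result inside the Choquet-theoretic framework of affine economies; what yours buys is an elementary, self-contained, and essentially computable equilibrium for \emph{every} simplex economy, not only minimal ones, which is closer in spirit to the explicit pair $(F^*,p^*)$ constructed in Section 3 of the paper. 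Note, however, that your allocation and price are not the paper's $F^*$ and $p^*$, so while your construction yields an explicit equilibrium in full generality, it does not settle the paper's open question about whether that particular pair remains an equilibrium without minimality.
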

\begin{proof}
Given $p=(p_1,\ldots,p_n),q=(q_1,\ldots,q_n)\in P$, $0\leq \lambda\leq 1$ and $f=(a_1,\ldots,a_n)\in[0,1]^n=C(\partial P,[0,1])$:\begin{align*}
f\cdot (\lambda p+(1-\lambda)q)=&\,a_1(\lambda p_1+(1-\lambda)q_1)+\dots+a_n(\lambda p_n+(1-\lambda)q_n)\\
=&\,\lambda(a_1p_1+\dots +a_1p_n)+(1-\lambda)(a_1q_1+\dots +a_nq_n)\\
=&\,\lambda (f\cdot p)+(1-\lambda)(f\cdot q)
\end{align*} and then, $$f:p\in P\mapsto f\cdot p\in[0,1]$$ becomes an affine continuous function on $P$. Accordingly, $P$ is a Bauer simplex.

It is straightforward to see that every simplex economy becomes an affine economy and we conclude by applying  \cite[Theorem 4.4]{MR4573494}.
\end{proof}
While the existence of competitive equilibrium in abstract economic models holds theoretical importance, its practical relevance hinges on our ability to compute such equilibria in concrete settings. This study addresses this gap by presenting a novel approach to directly compute equilibria in the context of the simplex economies.

For any simplex economy $\langle W,\sigma\rangle$ denote $$\begin{array}{rcl}
\sigma(i^{1}_1)=&\dots &=\sigma(i^{1}_{m_1})=j_1\\
\vdots&\ &\vdots\\
\sigma(i^{k}_1)=&\dots&=\sigma(i^{k}_{m_k})=j_k
\end{array}$$ where $i_r^s=1,\ldots,m$, $j_1\neq\dots\neq j_k\in\{1,\ldots,n\}$ and $m_1+\cdots+m_k=m$.
\subsection{The feasible allocation}
\begin{definition}\label{F*}We denote by $F^*$ the {\itshape allocation}
 consisting of the family $\{f^*_1,\ldots,f^*_m\}$
of commodity bundles defined by $$f^*_{i_r^s}\cdot e_j=\begin{cases}w_{{i_r^s}{j_s}}+\dfrac{(m-m_s)\min\left\{w_{i{j_s}}:i\neq i_{1}^s,\ldots,i_{m_s}^s\right\}}{m_s}\  \text{ if }j=j_s;\\
w_{{i_r^s}{j_t}}-\min\left\{w_{i{j_t}}:i\neq i_{1}^t,\ldots,i_{m_t}^t\right\}\  \text{ if }j=j_t,\ t\neq s;\\
\dfrac{1}{m}\ \text{ if }j\neq j_1,\ldots,j_k.\end{cases}$$\end{definition}  
For every $j=1,\ldots,n$
$$\left(f^*_{1}+\dots+ f^*_{m}\right)\cdot e_{j}=1.$$
Furthermore, 
\begin{align*}(f^*_1+\cdots+f^*_m)\cdot p&=f^*_1\cdot p+\dots+f^*_m\cdot p\\
&=\sum_{j=1}^{n}\left(f^*_1\cdot e_j\right)p_j+\dots+\sum_{j=1}^{n}\left(f^*_m\cdot e_j\right)p_j\\&=
\sum_{j=1}^{n}\left(\left(f^*_1+\dots+f^*_m\right)\cdot e_j\right)p_j
\\&=
p_1+\cdots+p_n=1=\omega\cdot p\text{ for all }p\in P, \end{align*}which implies that $F^*$ becomes a feasible allocation. 
\subsection{The supporting price}\label{sp}
On constructing $F^*$ we are ensuring that the following linear system
\begin{equation}\label{supp}\left(\begin{array}{ccc}
\left(f^*_{1}-\omega_{1}\right)\cdot e_{j_1}&  \dots &\left(f^*_{1}-\omega_{1}\right)\cdot e_{j_k}\\
\vdots & \ddots& \vdots\\
\left(f^*_{m}-\omega_{m}\right)\cdot e_{j_1}&  \dots &\left(f^*_{m}-\omega_{m}\right)\cdot e_{j_k}\\
1&\dots&1
\end{array}\right)
\left(\begin{matrix}
p^*_{j_1}\\
\vdots\\
p^*_{j_k}
\end{matrix}\right)= \left(\begin{matrix}
0\\
\vdots\\
0\\
1
\end{matrix}\right)
\end{equation}
is compatible and determined.
\begin{definition}\label{p*}The {\itshape supporting price} for the feasible allocation $F^*$ is the price system 
$p^*=(p^*_1,\ldots,p^*_n)\in P$, where 
$p^*_{j_1},\ldots,p^*_{j_k}$ is the solution of the above system (\ref{supp}) and $p^*_j=0$ for $j\neq j_1,\ldots,j_k$.\end{definition}Supporting price $p^*$ satisfies:
 $$f^*_{i}\cdot p^*=\omega_{i}\cdot p^*\text{ for all } i=1,\ldots,m,$$for the feasible allocation $F^*$.
\begin{figure}[h]
\centering
\scalebox{0.4}{\includegraphics{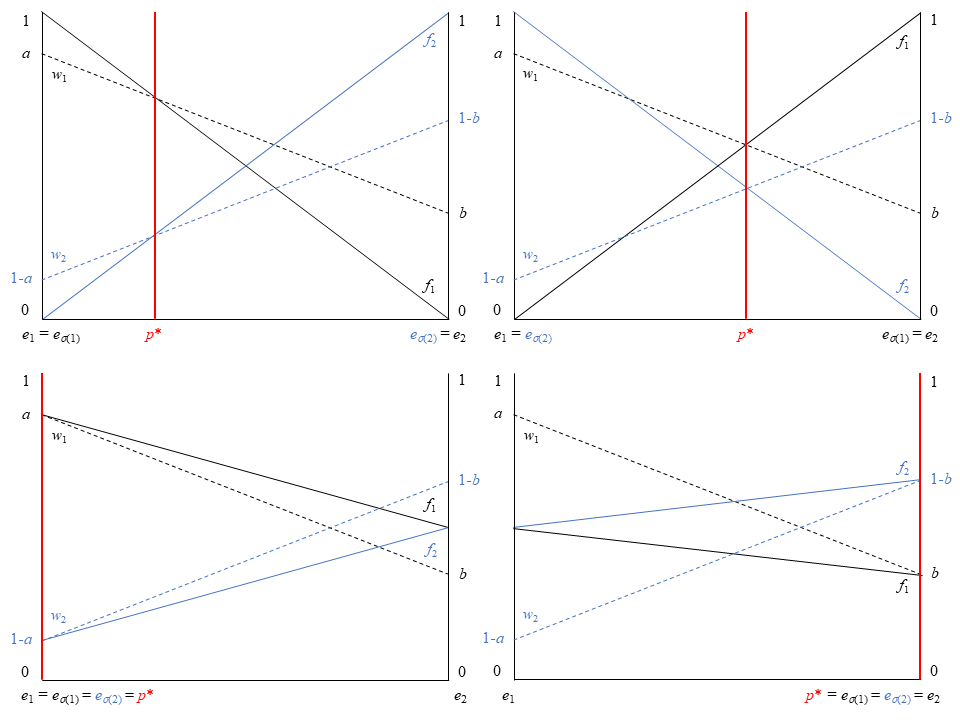}}
\caption{2-commodity-price space and the representation of 2-consumers competitive equilibrium}\centering\label{figure2}
\end{figure}
\begin{definition}\label{minimal}A simplex economy $\langle W,\sigma\rangle$ is said to be {\itshape minimal} provided there exists $i_r^s\in\{1,\ldots,m\}$ such that $$w_{{i_r^s}{j_t}}=\min\left\{w_{i{j_t}}:i\neq i_{1}^t,\ldots,i_{m_t}^t\right\}\text{ for all } t\neq s.$$ 
\end{definition}
\begin{theorem}\label{main}
If $\langle W,\sigma\rangle$ is a minimal simplex economy, then the feasible allocation $F^*$ is a competitive equilibrium supported by the price system $p^*$.
\end{theorem}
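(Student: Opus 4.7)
The plan is to argue by contradiction: assume $F^*\prec G$ while $g_i\cdot p^*\leq w_i\cdot p^*$ for every consumer $i$, and derive an impossibility by specialising at the minimal consumer. Throughout I abbreviate $\mu_t=\min\{w_{i,j_t}:i\neq i_1^t,\ldots,i_{m_t}^t\}$ for $t=1,\ldots,k$.

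First I would translate strict preference and budget compliance into a single inequality on prices. For any consumer $i_r^s$, Definition~\ref{F*} together with $F^*\prec G$ gives $g_{i_r^s}\cdot e_{j_s}>w_{i_r^s,j_s}+(m-m_s)\mu_s/m_s$. Since $p^*_j=0$ for $j\notin\{j_1,\ldots,j_k\}$, combining this with the assumed $g_{i_r^s}\cdot p^*\leq w_{i_r^s}\cdot p^*$ rearranges into
\[
\sum_{t\neq s} p^*_{j_t}\bigl(w_{i_r^s,j_t}-g_{i_r^s,j_t}\bigr)\;>\;p^*_{j_s}\cdot\frac{(m-m_s)\mu_s}{m_s},
\]
the strict inequality being valid provided $p^*_{j_s}>0$.

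Next I would specialise to the minimal consumer $i^{s_0}_{r_0}$ furnished by Definition~\ref{minimal}. Its defining property $w_{i^{s_0}_{r_0},j_t}=\mu_t$ for every $t\neq s_0$, combined with $g_{i^{s_0}_{r_0},j_t}\geq 0$, bounds the left-hand side above by $\sum_{t\neq s_0}p^*_{j_t}\mu_t$. On the other hand, the identity $f^*_{i^{s_0}_{r_0}}\cdot p^*=\omega_{i^{s_0}_{r_0}}\cdot p^*$, which is precisely the row of system~(\ref{supp}) indexed by $i^{s_0}_{r_0}$, reads
\[
\sum_{t\neq s_0}p^*_{j_t}\mu_t\;=\;p^*_{j_{s_0}}\cdot\frac{(m-m_{s_0})\mu_{s_0}}{m_{s_0}}.
\]
Chaining the three displays produces a strict inequality between two equal quantities, the desired contradiction.

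The main obstacle is ensuring $p^*_{j_{s_0}}>0$ so that the preference-plus-budget inequality is genuinely strict. Inspecting (\ref{supp}) one sees that the rows indexed by consumers sharing a preference coincide, and the reduced $k\times k$ system admits the explicit solution with $p^*_{j_s}$ proportional to $m_s/\mu_s$; this is strictly positive exactly in the non-degenerate regime $\mu_t>0$ for every $t$, to which the compatibility-and-determinacy clause in Definition~\ref{p*} effectively restricts the statement. Once this positivity is in hand, the three-step chain above is airtight.
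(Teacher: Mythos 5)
Your proof is correct and follows essentially the same route as the paper: at the consumer furnished by minimality you exploit the identity $f^*_{i^{s_0}_{r_0}}\cdot p^*=\omega_{i^{s_0}_{r_0}}\cdot p^*$ (your second display is just the paper's observation that $f^*_{i^{s_0}_{r_0}}\cdot e_{j_t}=0$ for all $t\neq s_0$, written out in terms of the $\mu_t$), and chaining it with strict preference and the budget constraint yields the same one-line contradiction. The only place you go beyond the paper is your closing paragraph on positivity: the paper's own strict inequality $\left(f^*_{i_r^s}\cdot e_{j_s}\right)p^*_{j_s}<\left(g_{i_r^s}\cdot e_{j_s}\right)p^*_{j_s}$ also tacitly requires $p^*_{j_s}>0$, and your explicit solution $p^*_{j_s}\propto m_s/\mu_s$ of system~(\ref{supp}) --- which indeed reproduces $p^*=(0.25,0,0.25,0.5)$ in Example~\ref{exa} --- is a legitimate way to secure this in the non-degenerate case $\mu_t>0$ for all $t$, so your attempt is, if anything, slightly more careful than the published argument.
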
 
\begin{proof}
On the one hand, minimality condition implies that there exists $i_r^s\in\{1,\ldots,m\}$ such that $$f^*_{i_r^s}\cdot e_{j_t}=0 \text{ for all }t\neq s.$$
Since $p^*_j=0$ for $j\neq j_1,\ldots,j_k$ then
$$f^*_{i_r^s}\cdot p^*=\sum_{j=1}^{n}\left(f^*_{i_r^s}\cdot e_{j}\right)p^*_{j}=\left(f^*_{i_r^s}\cdot e_{j_s}\right)p^*_{j_s}.$$  

On the other hand, if $F^*\prec G$ and $G$ is
within the $p^*$-budget, then 
$$f^*_{i_r^s}\cdot e_{j_s}<g_{i_r^s}\cdot e_{j_s}\text{ and }g_{i_r^s}\cdot p^*\leq \omega_{i_r^s}\cdot p^*$$respectively. Therefore
\begin{align*}
f^*_{i_r^s}\cdot p^*=\left(f^*_{i_r^s}\cdot e_{j_s}\right)p^*_{j_s}<\left(g_{i_r^s}\cdot e_{j_s}\right)p^*_{j_s}\leq g_{i_r^s}\cdot p^*\leq 
\omega_{i_r^s}\cdot p^*=f^*_{i_r^s}\cdot p^*,
\end{align*}which is a contradiction. Hence, $F^*$ is a competitive equilibrium supported by $p^*$.
\end{proof}
\subsection{Open question}
We do not know whether $F^*$ remains a competitive equilibrium supported by $p^*$ if we unassume minimality condition (recall that if minimality does not hold, then $f^*_{i}\cdot p^*>  \left(f^*_{i}\cdot e_{j}\right)p^*_{j}$ for all $i,j$).
\section{Computing an example: Maxima source code}
 Theorem \ref{main} establishes the existence  of a competitive equilibrium in any minimal simplex economy. 

By means of the computer algebra system \cite{maxima} we develop a source code computing such a equilibrium. We illustrate the algorithm achievements throughout  a numerical example:
\begin{algorithm}[h]
\caption{Inputs and checking simplexity}\label{a1}
\begin{lstlisting}[language=Maxima]
block(
  n:read("Number of commodities:"),  
  m:read("Number of consumers:"),           
  W:zeromatrix(m,n),
  d:makelist(0,j,1,n),
  s:makelist(0,i,1,m),
  N:makelist(j,j,1,n), 
  for j:1 thru n do(   
    for i:1 thru m do(
      W[i,j]:read("w(",i,",",j,"):"),
      if(W[i,j]<0 or W[i,j]>1) then(
        print("not in [0,1]"),W[i,j]:0,i:i-1)
      else(d[j]:d[j]+W[i,j]))),
  for i:1 thru m do(
    s[i]:read("sigma(",i,"):"),
    if(askinteger(s[i])=no) then(
      print("not an integer"),i:i-1) 
    elseif(s[i]<1 or s[i]>n) then(
      print("not in ",N),i:i-1)),
  return("W"=W),
  return("sigma"=s),
  return("Sum"=d))
\end{lstlisting}
\end{algorithm}
\begin{example}\label{exa}Consider the minimal simplex economy consisting of $n=4$ commodities and $m=5$ consumer agents, having as initial endowments the stochastic allocation   
$$W=\left(\begin{matrix}
\mathbf{0.2}& 0.4 & \mathbf{0.1}&\mathbf{0.1}\\
{0.2} &0.3& 0.2 &0.4\\
0.2&0.2 & 0.2&0.3\\
0.2&0.1&0.3&{0.1}\\
0.2&0&0.2&{0.1}
\end{matrix}\right),$$
and preferences represented by the variation with repetition $$\sigma=(1,1,3,4,4).$$ 
\end{example}
Algorithm \ref{a1} asks for $W$ and $\sigma$ and checks if they satisfy required conditions of Definition \ref{simplex}. We have not included checking stochastic condition since initial endowments could to be provided with no exact decimals, and therefore the sum of the respective columns have not to exactly equal 1 unvalidating the stochastic condition. Nevertheless, algorith return vector $Sum$ to freely deciding whether the matrix holds to be stochastic or not.  
\begin{algorithm}[h]
\caption{Checking minimality}\label{a2}
\begin{lstlisting}[language=Maxima]
block(
  Min:makelist(1,j,1,n),
  t1:makelist(1,i,1,m),
  t0:makelist(0,i,1,m),
  for j:1 thru n do(
    for i:1 thru m do(
      if(s[i]#j) then( 
        if(Min[j]>W[i,j]) then(
          Min[j]:W[i,j])))),
  for i:1 thru m do(
    for j in s do(
      if(j#s[i] and W[i,j]#Min[j]) then(
        t1[i]:0))),
  if(t1=t0) then(
    print("The simplex economy is not minimal")) else(
    print("The simplex economy is minimal")),
  return("Min"=Min))
\end{lstlisting}
\end{algorithm}

Algorithm \ref{a2} checks Definition \ref{minimal} requirements. Notice that for $i_1^1=1$ it is satisfied \begin{align*}
&\mathbf{0.1}=w_{13}=\min\{w_{13},w_{23},w_{43},w_{53}\}=\min\{0.1,0.2,0.3,0.2\}\\
&\mathbf{0.1}=w_{14}=\min\{w_{14},w_{24},w_{34}\}=\min\{0.1,0.4,0.3\}
\end{align*}
and therefore $\langle W,\sigma\rangle$ is minimal. 

\begin{algorithm}[h]
\caption{Computing the feasible allocation $F^*$}\label{a3}
\begin{lstlisting}[language=Maxima]
block(
  F:zeromatrix(m,n),  
  M:makelist(0,j,1,n),
  for j in s do(
     M[j]:M[j]+1),
  for i:1 thru m do(
    for j:1 thru n do(
      F[i,j]:1/m),
    for j in s do(
      if(s[i]=j) then(
        F[i,j]:W[i,j]+((m-M[j])/M[j])*Min[j]) else(
          F[i,j]:W[i,j]-Min[j]))),
  return("F*"=F))
\end{lstlisting}
\end{algorithm}

Algorithm \ref{a3} computes the feasible allocation $F^*$ as in Definition \ref{F*} by obtaining $$F^*=\left(\begin{matrix}
{0.5}& 0.2& 0&0\\
{0.5} &0.2& 0.1&0.3\\
0&0.2 & {0.6}&0.2\\
0&0.2 & 0.2&{0.25}\\
0&0.2 &0.1&{0.25}
\end{matrix}\right)$$ 

Observe from $\sigma=(1,1,3,4,4)$ that consumers $1,2$ prefer commodity $1$. Both obtain $0.5$ after the feasible distribution while they contributed with $0.2$. Consumer 3 prefers commodity 3 and receives $0.6$ once it has apported $0.2$. Consumers $4,5$ prefer commodity 4 and obtain $0.25$ once contributing both with $0.1$. All consumers improve according with their preferences after the feasible allocation.
Commodity 2 is not preferred by any consumer agent. The feasible allocation suggests dividing the entire good equally among the consumers, by assigning the amount 0.2 to each one, regardless of their contributions.
\begin{algorithm}[h]
\caption{Computing the supporting price $p^*$}\label{a4}
\begin{lstlisting}[language=Maxima]
block(
  pvp:makelist(0,j,1,n),                 
  p:transpose(matrix(makelist('p[j],j,1,n))),
  S:addrow(F-W,makelist(1,j,1,n)),
  for j:1 thru n do(
    if(col(F,j)=transpose(matrix(makelist(1/m,i,1,m)))) then(
      S:submatrix(S,j),
      p:submatrix(j,p),
      pvp[j]:1)),
  B:addrow(transpose(makelist(0,i,1,m)),[1]),
  p:linsolve(transpose(S.p-B)[1],transpose(p)[1]),
  for j:1 thru n do(
    if(pvp[1]=1) then(
      p:append(cons(0,makelist(rhs(p[i]),i,1,length(p)))))
    elseif(j>1 and pvp[j]=1) then(        
      p:append(makelist(rhs(p[i]),i,1,j-1),cons(
        0,makelist(rhs(p[i]),i,j,length(p)))))),
  return("p*"=p))
\end{lstlisting}
\end{algorithm}

Algorithm \ref{a4} computes the supporting price $p^*$ as in Definition \ref{p*} by solving the linear system
\[\left(\begin{array}{rrr}
{0.3}&  -0.1 &-0.1\\
{0.3} & -0.1& -0.1\\
-0.2& {0.4}& -0.1\\
-0.2& -0.1&0.15\\
-0.2&-0.1&0.15\\
1&1&1
\end{array}\right)
\left(\begin{matrix}
p^*_1\\
p^*_3\\
p^*_4
\end{matrix}\right)= \left(\begin{matrix}
0\\
0\\
0\\
0\\
0\\
1
\end{matrix}\right)\]and obtaining $$p^*=\left(0.25,0,0.25,0.5\right).$$

Supporting price remains the same value of the bundles that the initial endowments.

Recall that commodity 4 has the highest price due to it is highly demanded, and a relatively small amount is provided by the interested agents.
Initial endowment $\omega_2$ has the highest value $\omega_2\cdot p^*=0.3$ and conversely $\omega_1$ has the lowest value $\omega_1\cdot p^*=0.125$ (consumer 2 contributed to the market with a big amount of commodity $4$ against the small amount of consumer 1).
\subsection{Computational complexity}
Our approach avoids the usage of numerical methods or approximation algorithms, which may be inaccurate, unstable, or inefficient. Instead, it uses symbolic computation and exact arithmetic, which guarantees the correctness and precision of the solution. Furthermore, the approach takes advantage of the minimal simplex economy structure, which simplifies the system of equations and reduces the number of unknowns. This makes the computation faster and easier, as demonstrated by the Maxima source code.

Computational cost of the algorithm can be approximated as $O(n*m)$, where $n$ is the number of commodities and $m$ is the number of consumers. It is worth noting that it includes several checks for input validity, although important for ensuring correct program behavior, they also add to the computational cost. However, since these checks are also within the loops, they do not change the overall time complexity.
\section{Concluding remarks}
\subsection*{Economic implications} 
While theoretical proof of the existence of a  competitive equilibrium is hugely valuable, our ability to compute it unlocks doors to understanding and influencing real-world markets.

For instance, by unveiling hidden efficiency of a seemingly well-functioning market. Authorities often grapple with simulate the impact of policies (taxes, subsidies, etc.). By computing the resulting equilibrium, they can predict changes in prices and  allocations producing a bigger individual welfare, leading to more  impactful policy choices. Supply chains, public resource management, and market design  require balancing efficiency with fairness. Modeling these systems  and being able to compute equilibria  can translate to lower costs, better resource utilization, and improved satisfaction for all stakeholders.

Moving forward, extending this approach to broader, more complex economic structures would be crucial to unlock its full potential and pave the way for a deeper understanding of how economies behave and evolve.
\subsection*{Real-time implementation}
Imagine an online auction platform where buyers and sellers interact in real time. As demand fluctuates, prices should adjust dynamically ensuring efficient transactions. 

While theoretical foundation provides a solid framework for understanding market dynamics and resource allocation, real-time changes require adaptive algorithms that respond promptly, thus ensuring accurate, timely data is crucial for reliable equilibrium computation. 

To this aim, we may set dynamic simplex economies as duplas $\langle W(t),\sigma(t)\rangle$ where 
$W(t)$ is a dynamic stochastic allocation defining real-time endowments $\{\omega_1(t),\ldots,\omega_m(t)\}$, as continuous functions 
$$\omega_i:\mathbb{R}_+\to [0,1]^n,\ t\mapsto \omega_i(t)=(w_{i1}(t),\ldots,w_{in}(t))\ \text{ for }i=1,\ldots,m,$$and  
$$\sigma:\mathbb{R}_+\to \{1,\ldots,n\}^m,\ t\mapsto \sigma(t)=(\sigma_1(t),\ldots,\sigma_m(t))$$representing the dynamic preferences of the consumer agents. 

Therefore, our algorithm could be easily adapted to real-time data by responding dynamically to changing economic conditions and compute prices based on bid and ask data.

\end{document}